
\documentclass[a4paper, 10pt, conference,onecolumn]{ieeeconf}      % Use this line for a4
                                                          % paper
%\documentclass[A4paper]{ieeeconf}

\IEEEoverridecommandlockouts                              % This command is only
                                                          % needed if you want to
                                                          % use the \thanks command
\overrideIEEEmargins

\newtheorem{theorem}{Theorem}

\newtheorem{proposition}[theorem]{Proposition}
\newtheorem{assumption}[theorem]{Assumption}

\newtheorem{example}[theorem]{Example}
\def\R{{\Bbb R}}
\def\N{{\Bbb N}}

\def\E{{\Bbb E}}

\def\L{{\Bbb L}}
\def\X{\mathcal{X}}
\def\H{\mathcal{H}}

\def\A{\mathcal{A}}

\newcommand{\until}[1]{\{1,\dots, #1\}}

% The following packages can be found on http:\\www.ctan.org
\usepackage{graphicx,subfigure}% for pdf, bitmapped graphics files
\usepackage{epsfig} % for postscript graphics files
\usepackage{mathptmx} % assumes new font selection scheme installed
\usepackage{times} % assumes new font selection scheme installed
\usepackage{amsmath} % assumes amsmath package installed
\usepackage{amssymb}  % assumes amsmath package installed
\usepackage{euscript}
\usepackage{color}
\usepackage{algorithmic}
\usepackage{algorithm}

\title{\bf Multi-agents adaptive estimation and coverage control\\ using Gaussian regression 
\thanks{ This work is supported by the European Community's Seventh Framework Programme [FP7/2007-2013] under
 grant agreement n. 257462 HYCON2 Network of excellence and by the MIUR FIRB project RBFR12M3AC-Learning meets time: a new computational approach to learning in dynamic systems .}}
\author{Andrea Carron, Marco Todescato, Ruggero Carli, Luca Schenato, Gianluigi Pillonetto
  \thanks{ A. Carron, M. Todescato, R. Carli, L. Schenato, M. Todescato, G. Pillonetto are with the
    Department of Information Engineering, University of Padova, 
    Via Gradenigo 6/a, 35131 Padova, Italy 
    {\tt\small \{carronan|todescat|carlirug|schenato|giapi\}
    @dei.unipd.it}. 
  }}
\date{}

\begin{document}

\maketitle

\begin{abstract}
We consider a scenario where the aim of a group of agents is to perform the optimal coverage
of a region according to a sensory function. In particular, centroidal Voronoi partitions have to be computed. 
The difficulty of the task is that the sensory function is unknown and has to be reconstructed on line from noisy measurements.
Hence, estimation and coverage needs to be performed at the same time.  We cast the problem in a Bayesian regression framework, where the sensory function  is seen as a Gaussian random field. Then, we design a set of control inputs which try to well balance coverage and estimation, also discussing
convergence properties of the algorithm. Numerical experiments show the effectivness of the new approach.
\end{abstract}

\section{Introduction}

The continuous progress on hardware and software is allowing the appearance of compact and relatively inexpensive autonomous vehicles embedded with multiple sensors (inertial systems, cameras, radars, environmental monitoring sensors), high-bandwidth wireless communication and powerful computational resources. While previously limited to military applications, nowadays the use of cooperating vehicles for autonomous monitoring and large environment, even for civilian applications, is becoming a reality. Although robotics research has obtained tremendous achievements with single vehicles, the trend of adopting multiple vehicles that cooperate to achieve a common goal is still very challenging and open problem. 

In particular, an area that has attracted considerable attention for its practical relevance is the problem of \emph{environmental partitioning problem} and \emph{coverage control} whose objective is to partition an area of interest into subregions each monitored by a different robot trying to optimize some global cost function that measures the quality of service provided by the monitoring robots.

The "centering and partitioning" algorithm originally proposed by Lloyd \cite{LLOYD:82} and elegantly reviewed in the survey \cite{DU:99} is a classic approach to environmental partitioning problems and coverage control problems. The Lloyd algorithm computes Centroidal Voronoi partitions as optimal configurations of an important class of objective functions called coverage functions. 
The Lloyd approach was first adapted for distributed coverage in the robotic multiagent literature
control in \cite{Cortes:04}; see also the text \cite{Bullo:09} (Chapter $5$ and literatures notes in Section $5.4$) for a comprehensive treatment. 
Since this beginning, similar algorithms have
been applied to non-convex environments \cite{Zhong08}, \cite{Pimenta08}, to dynamic routing with equitable partitioning \cite{Baron:07}, to robotic networks with limited anisotropic sensory \cite{Laventall:09} and to
coverage with communication constraints \cite{Bullo:12}.

Most of the works cited above assume that a global sensory cost function is known a priori by each agent. Therefore, the focus is limited to the distributed coverage control problem. However, it is often unrealistic to assume such function to be known. For instance, consider a group of underwater vehicles whose main goal is to monitor areas which present a higher concentration of pollution. The distribution of pollution is not known in advance, but vehicles are provided with sensors that can take noisy measurements of it. In this context, coverage control is much harder since the vehicles has to simultaneously explore the environment to estimate pollution distribution and to move to areas with higher pollution concentrations. This is a classical robotic task often referred to as \emph{coverage-estimation} problem. In \cite{Schwager}, an adaptive strategy is proposed to solve it but the agents are assumed to take an uncountable number of noiseless measurements. Moreover, the authors used a parametric approach with the assumption that the true function belongs to such class. More recently, \cite{Choi} proposed a non parametric approach based on Markov Random Fields for adaptive sampling and function estimation. This approach has the advantage to provide better approximation of the underlying sensory function as well confidence bounds on the estimate.

The novelty of this work is to consider a Bayesian non parametric learning scheme where, under the framework of Gaussian regression 
\cite{Rasmussen}, the unknown function is modeled as a zero-mean Gaussian random field. Robot coordination control is guaranteed to incrementally improve the estimate of the sensory function and simultaneously achieve asymptotic optimal coverage control. Although robot motion is generated by a centralized station, this work provides a starting point to design coordination algorithm for simultaneous estimation and coverage.
Note however that the \textit{robot to base station communication model} adopted in this paper already finds application
for ocean gliders interfaces communicating with a tower \cite{PereiraCommunication}, UAV data mules that periodically visit ground robots \cite{Shah}, or cost-mindful use of satellite or cellular communication.

Classical learning problem consists of estimating a function from examples collected on input locations drawn from a fixed probability density function (pdf) \cite{Poggio90,Smale2007}. Recent extensions also replace such pdf with a convergent sequence of probability measures \cite{SmaleOnline}. 
When performing coverage, the stochastic mechanism underlying the input locations establishes how the agents move inside the domain
of interest. The peculiarity of our algorithm is that such pdf is allowed to vary over time, depending also on the current estimate of the function. 
Hence, agents locations consist of a non Markovian process, leading to a learning problem where stochastic adaption may happen infinitely often
(with no guarantee of convergence to a limiting pdf). Under this complex scenario, we will derive conditions that ensure statistical consistency of 
the function estimator both assuming that the Bayesian prior is correct and relaxing this assumption. In this latter case, 
we assume that the function belongs to a suitable reproducing kernel Hilbert space and provide 
a non trivial extension of the statistical learning estimates derived in \cite{Smale2007} (technical details are gathered in Appendix).

The paper is so organized. After giving some mathematical preliminaries in Section \ref{MP}, problem statement is reported in Section \ref{PF}. The proposed algorithm is presented in Section \ref{AL}, with its convergence propriety discussed in Section \ref{CP}. In Section \ref{sct:NmrRsl} are reported some simulations results. Conclusions then end the paper.

\section{Mathematical preliminaries}
\label{MP}
Let $\mathcal{X}$ be a compact and convex polygon in $\R^2$ an let $\|\cdot\|$ denote the Euclidean distance function. Let $\mu: \mathcal{X} \to \R_{>0}$ be a distribution density function defined over $\mathcal{X}$. Within the context of this paper, a \emph{partition} of $\mathcal{X}$ is a collection of $N$ polygons $\mathcal{W}=(W_1,\ldots, W_N)$ with disjoint interiors whose union is $\mathcal{X}$. Given the list of $N$ points in $\mathcal{X}$, $\mathbf{x}=(x_1,\ldots,x_N)$, we define the Voronoi partition $\mathcal{V}(\mathbf{x})=\left\{V_1(\mathbf{x}),\ldots,V_N(\mathbf{x})\right\}$ generated by $\mathbf{x}$ as
$$
V_i(\mathbf{x})=\left\{q \in \mathcal{X} \,|\,\,\|q-x_i\|\leq \|q-x_j\|, \,\,\,\forall j \neq i\right\}.
$$
For each region $V_i$, $i \in \until{N}$, we define its centroid with the respect to the density function $\mu$ as
$$
c_i(V_i(\mathbf{x}))= \left(\int_{V_i(\mathbf{x})}\mu(q)dq \right)^{-1} \int_{V_i(\mathbf{x})} q \mu(q) dq.
$$
We denote by
$$
\textbf{c}(\mathcal{V}(\mathbf{x}))=(c_1(V_1(\mathbf{x})), \ldots, c_N(V_N(\mathbf{x})))
$$
the vector of regions centroids corresponding to the Voronoi partition generated by $\mathbf{x}=(x_1,\ldots,x_N)$. A partition is said to be a Centroidal Voronoi partition of the pair $\left( \mathcal{X}, \mu \right)$ if , for $i \in \until{N}$, the point $x_i$ is the centroid of $V_i(\mathbf{x})$. 

Given $\mathbf{x}=(x_1,\ldots,x_N)$ and a density function $\mu$ we introduce the \emph{Coverage function} $H(\mathbf{x}; \mu)$ defined as
$$
H(\mathbf{x}; \mu)= \sum_{i=1}^N \int_{V_i(\mathbf{x})}\|q-c_i(V_i(\mathbf{x}))\|^2 \mu(q) dq  
$$
For a fixed density function $\mu$, it can be shown that the set of local minima $H(\mathbf{x}; \mu)$ is composed by the  points $\mathbf{x}=(x_1,\ldots,x_N)$ are such $x_1,\ldots,x_N$ are the centroids of the corresponding regions $V_1(\mathbf{x}), \ldots, V_N(\mathbf{x})$, i.e, $\mathcal{V}(\mathbf{x})$ is a Centroidal Voronoi partition.

\subsection{Coverage Control Algorithm}
\label{CCA}
Let $\mathcal{X}$ be a convex and closed polygon in $\R^2$ and let $\mu$ be a density function defined over $\mathcal{X}$.  Consider the following optimization problem 
$$
\min_{\mathbf{x} \in Q^N} H(\mathbf{x}; \mu).
$$
The \emph{coverage algorithm} we consider is a version of the classic Lloyd algorithm based on "centering and partitioning" for the computation of Centroidal Voronoi partitions. Given an initial condition $\mathbf{x}(0)$ the algorithm cycles iteratively the following two steps:
\begin{enumerate}
\item computing the Voronoi partition corresponding to the current value of $\mathbf{x}$, namely, computing $\mathcal{V}(\mathbf{x})$;
\item updating $\mathbf{x}$ to the vector $\textbf{c}(\mathcal{V}(\mathbf{x}))$.
\end{enumerate}
In mathematical terms, for $k \in \N$, the algorithm is described as
\begin{equation}\label{eq:Lloyd}
\mathbf{x}(k+1)=\textbf{c}(\mathcal{V}(\mathbf{x}(k))).
\end{equation}
It can be shown \cite{Cortes:04} that the function $H(\mathbf{x}; \mu)$ is monotonically non-increasing along the solutions of \eqref{eq:Lloyd} and that all the solutions of \eqref{eq:Lloyd} converge asymptotically to the set of configurations that generate centroidal Voronoi partitions.
It is well known \cite{Cortes:04} that the set of centroidal Voronoi partitions of the pair $\left( \mathcal{X}, \mu \right)$ are the critical points of the coverage function $H(\textbf{x};\mu)$.

\section{Problem Formulation}
\label{PF}
Let $\mu: \mathcal{X} \rightarrow \mathbb{R}$ an unknown function modeled as the realization of a zero-mean Gaussian random field with covariance $K:\mathcal{X} \times \mathcal{X} \rightarrow \mathbb{R}$. We restrict our attention to radial kernels, i.e. $K(a,b)=h(\parallel a - b \parallel)$, such that if $\parallel a - b \parallel \leq \parallel c - d \parallel$ then $h(\parallel a - b \parallel) \leq h(\parallel c - d \parallel)$ and $K(x,x) = \lambda, \ \forall x \in \mathcal{X}$.

Assume we are given a central base-station, and  $N$ robotic agents each moving in the space $\mathcal{X}$. The function $\mu$ is assumed to be unknown to both the agents and the central unit. Each agent $i \in \until{N}$ is required to have the following basic computation, communication and sensing capabilities:
\begin{itemize}
\item[(C1)] agent $i$ can  identify itself to the base station and can send information to the base station;
%\item[(C2)] agent $i$ can store in memory a convex polygon $W_i$ subset of $\mathcal{X}$ and a point $c_i \in W_i$;
\item[(C2)] agent $i$ can sense the function $\mu$ in the position it occupies; specifically, if $x_i$ denotes its current position,  it can take the noisy measurement 
\begin{equation*}
y(x_i) = \mu (x_i) + \nu_i,
\end{equation*}
where $\nu \backsim \mathcal{N}(0,\sigma^2)$, independent of the unknown function $\mu$,  and all mutually independent. 
\end{itemize}
The base station must have the following capabilities
\begin{itemize}
\item[(C3)] it can store all the measurements taken by all the agents; 
\item[(C4)] it can perform computations of partitions of $\mathcal{X}$;
\item[(C5)]it can send information to each robot;
\item[(C6)] it can store an estimate $\hat{\mu}$ of the function $\mu$ and of the posterior variance. 
\end{itemize}
The ultimate goal of the  group of agents and central base-station is twofold:
\begin{enumerate}
\item to explore the environment $\mathcal{X}$ through the agents, namely, to provide an accurate estimate $\hat{\mu}$ of the function $\mu$ exploiting the measurements taken by the agents; 
\item to compute a \emph{good} partitioning of $\mathcal{X}$ using the estimate $\hat{\mu}$,. 
\end{enumerate}

\section{The algorithm}
\label{AL}
To achieve the above goal the following \emph{Estimation + Coverage algorithm} (denoted hereafter as EC algorithm) is employed. \\

\begin{algorithm}
\caption{EC}
\label{alg:EC}
\begin{algorithmic}[1]
\REQUIRE The central base station (CBS) stores in memory all the measurements.
\FOR{k = 1,2,\dots}
\STATE {\bf Measurements collection}: For $i \in \until{N}$, agent $i$ takes the measurement $y_{i,k}$ and sends it to CBS.
\newline
\STATE {\bf Estimate update}: Based on $\bf x_k, x_{k-1}, \ldots, x_0$   and $\left\{y_{1,s}, \ldots, y_{N,s} \right\}_{s=0}^{k}$ CBS computes $\hat{\mu}_k$ and its posterior.
\newline
\STATE {\bf Trajectory update}: Based on $\hat{\mu}_k$ CBS computes $\bf u_k$ and sends it to agents. Agents update position as $\bf x_{k+1}= x_k +u_k$.
\ENDFOR
\end{algorithmic}
\end{algorithm}

Now, introducing the dynamic, we have that for each $k \in \N$ the central base-station stores in memory a partition ${\mathcal W}_k=(W_{1,k}, \ldots, W_{N,k})$ of $\mathcal{X}$, the corresponding list of centroids ${\bf c}_k=(c_{1,k},\ldots,c_{N,k})$, the positions of the robots $(x_{1,k},\ldots, x_{N,k})$ and all the measurements received up to $k$ by the agents.
For $k \in \N$, agent $i$, $i \in \until{N}$, moves according to the following first-order discrete-time dynamics 
$$
x_{i,k+1}=x_{i,k}+u_{i,k}
$$ 
where the input $u_{i,k}$ is assigned to agent $i$ by the central base-station. As soon as agent $i$ reaches the new position $x_{i,k+1}$, it senses the function $\mu$ in $x_{i,k+1}$ taking the measurement $y_{i,k+1} = \mu (x_{i,k+1}) + \nu_{i,k}$ and it sends $y_{i,k+1}$ to the central base-station. The central base-station, based on the new measurements gathered $\left\{y_{i,k+1}\right\}_{i=1}^N$ and on the past measurements, computes a new estimate $\hat{\mu}_{k+1}$ of $\mu$; additionally it updates the partition $\mathcal{W}_k$, setting $\mathcal{W}_{k+1}=\mathcal{V}(x_{1,k+1},\ldots, x_{N,k+1})$.  

The goal is to iteratively update the position of the agents in such a way that, in a suitable metric, $\hat{\mu} \to \mu$ and the Coverage function assumes values as small as possible.

In next subsections we will explain how the central base-station updates the estimate $\hat{\mu}$ based on the measurements collected from the agents, and how it design the control inputs to drive the trajectories of the agents. It is quite intuitive that in order to have a better and better  estimate of the function $\mu$, the measurements  have to be taken to reduce as much as possible a functional of the posterior variance, in particular we will adopt the maximum of the posterior variance. To do so, in the first phase of the EC algorithm the agents will be spurred to explore the environment toward the regions which have been less visited. When the error-covariance of the estimate $\hat{\mu}$ is small enough everywhere, the central base-station will update the agents' position to reduce as much as possible the value of the coverage function.
   
To simplify the notation let us introduce
\begin{equation*}
z_{i,k} = \{ x_{i,k},y_{i,k} \}, \quad i=1,\ldots,N.
\end{equation*}
One of the key aspects of the algorithm is related to the agents movement, 
which establishes how positions $x_{i,k}$ are generated.  In particular, as clear in the sequel, each 
$x_{i,k}$ is a non Markovian process, depending on the whole past history $z_{i,1},\ldots,z_{i,k-1}$, $i=1,\ldots,N$.
It is useful to describe first the function estimator, then detailing the agent dynamics.\\

\subsection{Function estimate and posterior variance}

Hereby, we use $Z_{N,t}$ to denote the set $ \{ z_{i,k} \}$ with $i=1,\ldots,N$ and $k=1,\ldots,t$.
The agents movements are assumed to be regulated by probability densities 
fully defined by $Z_{N,t}$. It comes that the minimum variance estimate of $\mu$ given $Z_{N,t}$ is
%Irrespective of the particular stochastic machinery underlying the agents movement, 
%it is easy to show that the minimum variance estimate of $\mu$ given $Z_{N,t}$ is
\begin{eqnarray} 
&& \hat{\mu}_{t}(x)  = \mathbb{E}\left[\mu(x)| Z_{N,t}  \right] = \sum_{i=1}^N \sum_{k=1}^{t} c_i K(x_{i,k},\cdot) \label{Esmu}
\end{eqnarray}
where
\begin{eqnarray*}
&& \begin{bmatrix}
c_1\\
\vdots\\
c_N
\end{bmatrix}
= (\bar{K} + \sigma^2\mathbb{I})^{-1}
\begin{bmatrix}
y_{1,1}\\
\vdots\\
y_{N,t}
\end{bmatrix}
\end{eqnarray*}
and
\begin{equation*}
\bar{K} = 
\begin{bmatrix}
K(x_{1,1},x_{1,1}) & \ldots &  K(x_{1,1},x_{N,t}) \\
\vdots & & \vdots \\
K(x_{N,t},x_{1,1}) & \ldots &  K(x_{N,t},x_{N,t}) \\
\end{bmatrix}.
\end{equation*}
The a posteriori variance of the estimate, in a generic input location $x \in \mathcal{X}$, is
\begin{eqnarray} 
 V(x) =  \text{Var} \left[ \mu(x) | Z_{N,t} \right] 
= K(x,x) - 
\begin{bmatrix}
K(x_{1,1},x) & \ldots & K(x_{N,t},x)
\end{bmatrix}
(\bar{K} + \sigma^2 \mathbb{I})^{-1}
\begin{bmatrix}
K(x_{1,1},x) \\
\vdots \\
 K(x_{N,t},x)
\end{bmatrix}.
\label{Varmu}
\end{eqnarray}

\subsection{Description of agents dynamics} % (input locations generation)}
\label{Phases}
The generation of the control input can be divided in two phases: in the first, estimation and coverage are carried out together, while, when the estimate is good enough, i.e. the posterior variance is uniformly small, automatically the control switches to the second phase, where the standard coverage control algorithm reviewed in Section \ref{CCA} is deployed. 
\subsubsection{Phase I}
\label{sct:phaseI}
Let 

\begin{equation}
u_{i,k} = 
\begin{bmatrix}
\Re e( \rho_i e^{j\theta_i} )\\
\Im m (\rho_i e^{j\theta_i} )\\
\end{bmatrix}
\quad i=1,\ldots,N
\end{equation}
then the agents dynamics, for $i=1,\ldots,N$, are defined by
\begin{equation*}
x_{i,k+1} =
\begin{cases}
 x_{i,k} + u_{i,k}  & \text{if } x_{i,k} + u_{i,k} \in \mathcal{X} \\
 x_{i,k} & \text{if } x_{i,k} + u_{i,k} \notin \mathcal{X}
\end{cases}
\end{equation*}

Hence, variation of the agent's position is given by the random vector $\rho_i e^{j\theta_i}$,
where $\theta_i$ is a random variable on $[0,2\pi]$, determining the movement's direction,  
while $\rho_i$ is another random variable establishing the step length. The peculiarities
of our approach are the following ones:

\begin{itemize}
\item the statistics of $(\theta_i,\rho_i)$ vary over time and depend on the past history through 
the estimate $\hat{\mu}_t$ and a function $a(\cdot)$ of the maximum of its posterior variance, i.e.
$$
a\left(\max_{x \in \mathcal{X}} V(x) \right).
$$
Note that $a$ varies over time since it depends on the posterior variance
which also varies over time as the agents move over $\mathcal{X}$.
Hereby, to simplify notation, we use $a(t)$ to stress this dependence. 
In this way, at every $t$, a suitable trade-off is established 
between centroids targeting, which are never perfectly known, being function of $\mu$,
and the need of reducing their uncertainty. These two goals are called {\it{exploration}} and {\it{exploitation}} in \cite{Schwager};
\item the probability densities of $\theta_i$ and $\rho_i$ are assumed to be uniformly bounded below.
This means that, irrespective of the particular agent's position and instant $t$, 
there exists $\epsilon>0$ such that every set of Lebesgue measure $\ell>0$
can be reached in one step with probability greater than $\ell \epsilon$. 
\end{itemize}

\begin{example}
\label{exampleOfA}
We provide a concrete example by describing the specific update rule adopted during the numerical experiments
reported in section \ref{sct:NmrRsl}. The random variable $\rho_i$ is a truncated Gaussian, 
constrained to assume positive values, while $\theta_i$ is a bimodal Gaussian with support limited to the interval $[0,2\pi]$. 
More specifically,  for $i=1,\ldots,N$, the density of $\theta_i$ is
\begin{equation*}
p(\theta_i) = 
\begin{cases}
\frac{1-a(t)}{b_i(t)} e^{-\frac{(\theta_i-\theta_{C_i}(t))^2}{\sigma_{C_i}^2}} + \frac{a(t)}{c_i(t)} e^{-\frac{(\theta_i-\theta_{\Delta_i}(t))^2}{\sigma_{\Delta_i}^2}}, \quad & \theta_i \in [0,2\pi] \\
 0, & \theta_i \notin [0,2\pi]
\end{cases}
\end{equation*} 
where
\begin{equation*}
b_i(t) = \int_0^{2\pi} e^{-\frac{(\theta_i-\theta_{C_i}(t))^2}{\sigma_{C_i}^2}} d\theta_i \quad ,\quad c_i(t) = \int_0^{2\pi} e^{-\frac{(\theta_i-\theta_{\Delta_i}(t))^2}{\sigma_{\Delta_i}^2}}d\theta_i 
\end{equation*}
where 
\begin{itemize}
\item $\theta_{C_i}(t)$ determines the direction to follow at instant $t$ to reach the current estimate of the Voronoi centroid of the agent $i$
computed using $\hat{\mu}_t$ as defined in (\ref{Esmu});
\item $\theta_{\Delta_i}$ determines the direction given by the gradient of the posterior variance (\ref{Varmu}) 
computed at the input location occupied by the $i-th$ agent at the instant $t$;
\item $a(t) \in [0,1]$ is a control parameter that establishes the trade-off between exploration and exploitation at instant $t$.
In the next section an automatic way to tune this parameter based on the posterior variance will be presented; %allows to give more weight to the centroid or to the gradient direction;
\item $\sigma_{C_i}^2,\sigma_{\Delta_i}^2$ determine the level of dispersion of the density
around the directions given by $\theta_{C_i}$ and  $\theta_{\Delta_i}$.
\end{itemize}

A simple heuristic that allows to automatically determine the value of $a$ is based on the maximum of the posterior variance, with the constraint that $a$ has to satisfy the following conditions:
\begin{enumerate}
\item $a(t)$ has to be continuos as function of the maximum of the posterior,
\item $a(t)$ has to be monotonically increasing with the maximum of the posterior,
\item if $\max_{x \in \mathcal{X}} V(x) = \lambda$ then $a(t) = 1$ ,
\item  if $\max_{x \in \mathcal{X}} V(x) = 0$  then  $a(t) = 0$. 
\end{enumerate} 
Two examples are reported in Figure \ref{fig:exampleOfA}.

\begin{figure}[h!]
\begin{center}
\begin{tabular}{c}
\includegraphics[width=0.8\columnwidth]{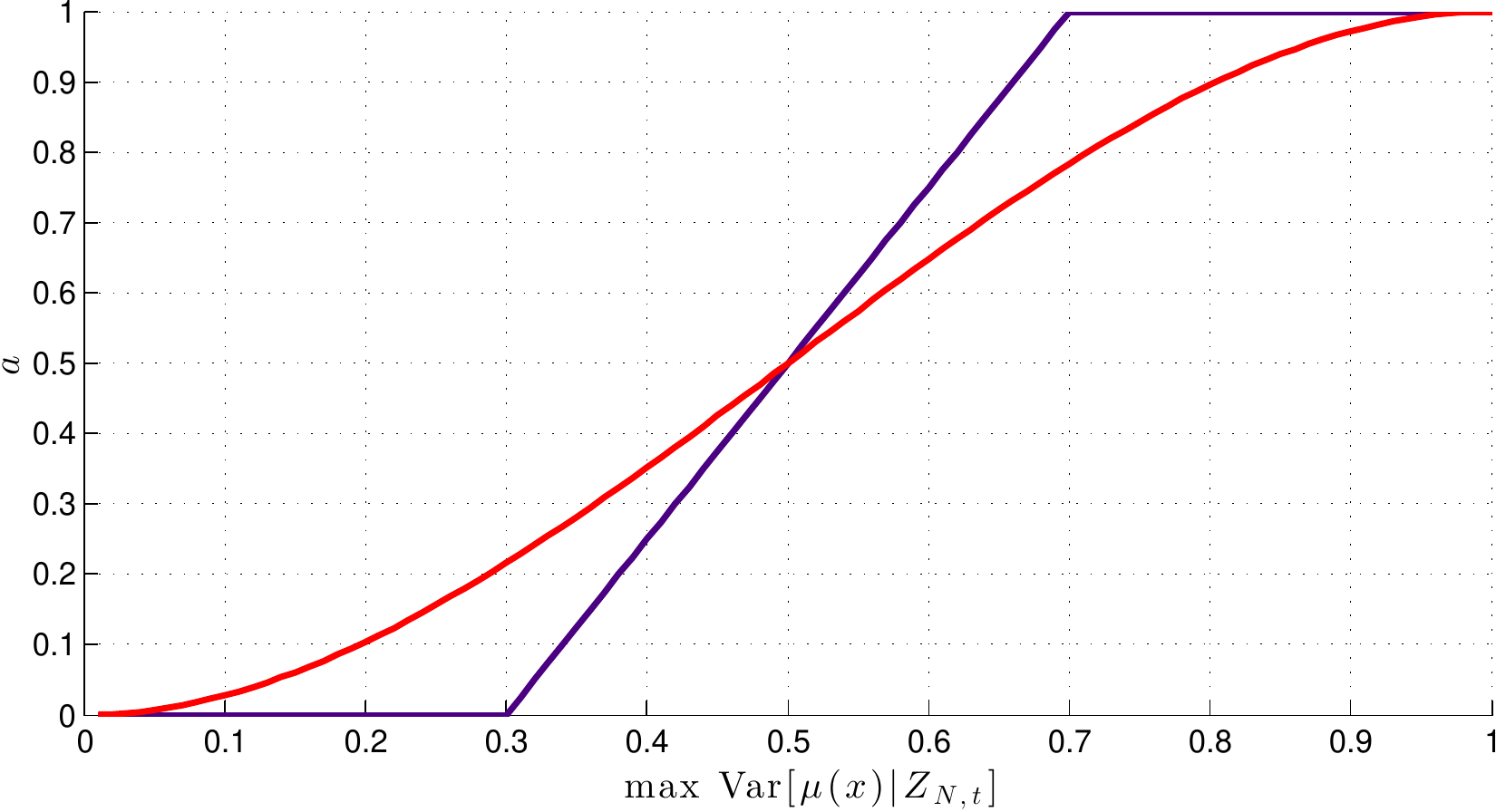}
\end{tabular}
\caption{\label{fig:exampleOfA} Here are reported two examples of a(t). In this case the $\max_{x \in \mathcal{X}} V(x) = \lambda = 1$.}
\end{center}
\end{figure}

At the beginning, being the posterior variance large, $a(t)$ will be close to $1$ and the agents will just explore the domain. Thanks to the monotonicity of $a(t)$, while the maximum of the posterior will be reduced, also $a(t)$ will be reduced and consequently the agents will privilege the coverage.
$\blacksquare$ \\
\end{example}

\subsubsection{Phase II}
When $a(t)$ is under a certain threshold, i.e. the posterior variance is uniformly low, the control input switches from the update rule described in section \ref{sct:phaseI} to $u_{i,k} = -(x_{i,k} - c_{i,k})$, so the agents will directly reach the estimated centroids. In other words, in this phase the Lloyd's algorithm is performed with the unknown function set to the estimate obtained at the end of the first phase. 

\section{Convergence properties of the algorithm}
\label{CP}
It is important to verify that (in probability) the posterior variance can be reduced as much as we want. Indeed, this fact implies that (with probability one) the agents dynamics will switch from phase I to phase II. The following result holds.

\begin{proposition} 
\label{propVariance}
Let $\mu$ be a zero-mean Gaussian random field of radial covariance $K$. Then, 
$\forall \epsilon \geq 0, \forall \delta  \in (0,1] $ there exists $t_0$ such that,
$\forall t \geq t_0$, one has:
\begin{equation*}
Pr\left[\max_{x \in {\mathcal{X}}} \text{ Var } \left( \mu(x)| Z_{N,t} \right) \leq \epsilon \right] \geq 1 - \delta 
\end{equation*}
\end{proposition}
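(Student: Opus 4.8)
The plan is to reduce the statement to a claim about how densely the agents eventually sample $\mathcal X$, combining three ingredients: the variational characterization of the Gaussian posterior variance, the continuity of the radial profile $h$ near the origin, and the uniform lower bound on the one-step transition densities of the agents. The key point is that, although a \emph{single} nearby measurement cannot drive the posterior variance to zero because of the observation noise, \emph{many} measurements clustered near a point $x$ can; and the uniform lower bound on the motion densities forces such clusters to form near every point of $\mathcal X$ as $t\to\infty$. (Throughout, the interesting range is $\epsilon>0$, since $V$ is everywhere strictly positive.)

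First I would record a deterministic pointwise bound. Since $\mu(x)$ and the data $Z_{N,t}$ are jointly Gaussian and zero-mean, $V(x)=\mathrm{Var}(\mu(x)\mid Z_{N,t})$ is the minimum mean-square error over all \emph{linear} estimators of $\mu(x)$ built from the measurements, so any particular estimator furnishes an upper bound on $V(x)$. Suppose $m$ of the sampled locations, say $x_1,\dots,x_m$, lie within distance $r$ of $x$, and test the plain average $\tfrac1m\sum_{j=1}^{m}y_j$. Expanding the mean-square error and using that the noise is independent of $\mu$, together with $K(a,a)=\lambda$, $h(\|a-b\|)\le\lambda$, and $h(\|x-x_j\|)\ge h(r)$ whenever $\|x-x_j\|\le r$, I expect to obtain
\begin{equation*}
V(x)\;\le\;\E\!\left[\Big(\mu(x)-\tfrac1m\textstyle\sum_{j=1}^{m}y_j\Big)^{\!2}\right]\;=\;\lambda-\tfrac{2}{m}\textstyle\sum_{j}h(\|x-x_j\|)+\tfrac{1}{m^2}\textstyle\sum_{j,l}h(\|x_j-x_l\|)+\frac{\sigma^{2}}{m}\;\le\;2\bigl(\lambda-h(r)\bigr)+\frac{\sigma^{2}}{m}.
\end{equation*}
By continuity of $h$ at $0$, $\lambda-h(r)\to0$ as $r\to0^{+}$; hence, given $\epsilon>0$, I would fix $r>0$ with $2(\lambda-h(r))\le\epsilon/2$ and then $m$ with $\sigma^{2}/m\le\epsilon/2$, so that $V(x)\le\epsilon$ as soon as at least $m$ sampled locations fall within distance $r$ of $x$.

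Next I would make this uniform in $x$ by compactness. Cover $\mathcal X$ by finitely many balls $B_1,\dots,B_M$ of radius $r/2$; if every $B_\ell$ contains at least $m$ sampled locations, then every $x\in\mathcal X$ has at least $m$ sampled locations within distance $r$, so $\max_{x\in\mathcal X}V(x)\le\epsilon$. It remains to show that, with probability at least $1-\delta$ for all $t$ large enough, each $B_\ell$ has been visited at least $m$ times. For this I would invoke the uniform lower bound on the densities of $(\theta_i,\rho_i)$: with $\beta>0$ its constant and $p_0:=\beta\min_\ell\mathrm{Leb}(B_\ell\cap\mathcal X)>0$ (positive because $\mathcal X$ is a convex polygon with nonempty interior), the probability that a fixed agent lands in a fixed $B_\ell$ at step $k$ is at least $p_0$, conditionally on the whole past $\mathcal F_{k-1}$. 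Iterating this conditional bound gives $\Pr[\text{the agent misses }B_\ell\text{ at steps }T,\dots,t]\le(1-p_0)^{t-T+1}\to0$, so almost surely each $B_\ell$ is visited infinitely often, and hence there is an almost surely finite random time $\tau$ past which every $B_\ell$ has been visited at least $m$ times; picking $t_0$ with $\Pr[\tau\le t_0]\ge1-\delta$ concludes. (For an explicit $t_0$ one can instead apply Azuma--Hoeffding to the martingale $\sum_{k\le t}\bigl(\mathbf 1\{x_{1,k}\in B_\ell\}-\Pr[x_{1,k}\in B_\ell\mid\mathcal F_{k-1}]\bigr)$, obtaining failure probability at most $M\exp(-(tp_0-m)^2/2t)$.)

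The main obstacle is the interplay of these steps rather than any one estimate: one has to (a) deal with the noise floor, which is why a single nearby sample is useless and the argument must be phrased in terms of many samples near every point; (b) pass from a pointwise to a uniform-in-$x$ bound, which is where compactness of $\mathcal X$ enters; and (c) control a non-Markovian, history-dependent motion that need not converge to any limiting sampling distribution, where the only available handle is the uniform conditional lower bound on the transition densities. Getting the quantifiers in the correct order---choosing $r$ and $m$ from $\epsilon$, then the finite cover, then $t_0$ from $\delta$ and $m,M,p_0$---is the delicate part.
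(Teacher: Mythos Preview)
Your proposal is correct and follows the same three-step skeleton as the paper: (i) a deterministic bound showing that $V(x)\le\epsilon$ once enough samples fall near $x$; (ii) a finite cover of the compact domain $\mathcal X$; (iii) the uniform lower bound on the one-step transition densities to guarantee that, with probability at least $1-\delta$, every cell of the cover eventually receives the required number of samples.

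The only substantive difference is in how step (i) is carried out. The paper works directly with the closed-form posterior variance: restricting to the $\bar m$ measurements in a cell $\mathcal D_j$, it uses $\mathrm{tr}(\bar K_j)=\bar m\lambda$ together with $\bar K_j\succeq 0$ to get $\bar K_j\preceq \bar m\lambda\,\mathbb I$, hence $(\bar K_j+\sigma^2\mathbb I)^{-1}\succeq(\bar m\lambda+\sigma^2)^{-1}\mathbb I$, and arrives at $V(x)\le\lambda-(\lambda-\bar\alpha)^2/(\lambda+\sigma^2/\bar m)$. You instead exploit the variational characterization of the Gaussian posterior variance as a minimum mean-square error and test the crude average $\tfrac1m\sum_j y_j$, obtaining $V(x)\le 2(\lambda-h(r))+\sigma^2/m$. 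Your route is more elementary (no matrix inequality is needed) and makes the role of the noise floor $\sigma^2/m$ and the kernel modulus $\lambda-h(r)$ completely transparent; the paper's bound is a bit sharper numerically but that is irrelevant here since both can be made $\le\epsilon$ by choosing $r$ (respectively $\bar\alpha$) small and $m$ large. Your treatment of step (iii), with the explicit iterated conditional bound $\Pr[\text{miss }B_\ell\text{ on }T,\dots,t]\le(1-p_0)^{t-T+1}$ and the optional Azuma--Hoeffding quantification, is also more careful than the paper's, which simply asserts that enough measurements accumulate with high probability.
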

\begin{proof}
Consider the following inequality
\begin{equation*}
\lambda - \frac{(\lambda - \alpha)^2}{\lambda + \frac{\sigma^2}{m}} \leq \epsilon.
\end{equation*}
Then, we can always choose a pair $\bar{\alpha}$ and $\bar{m}$ such that the previous inequality holds. By the continuity of the kernel, there exists a partition, function of $\bar{\alpha}$, given by all the subset $\mathcal{D}_j \subseteq \mathcal{X}$ such that $K(x,x^*) \geq \lambda - \bar{\alpha}, \ \forall x,x^* \in \mathcal{D}_j$.
For a sufficiently large $t$, with a probability greater then $1-\delta$, we can collect $\bar{m}$ or more measurements in each $\mathcal{D}_j$. 
In fact, $\forall A \subseteq \mathcal{X}$ and $\forall x_1 \in \mathcal{X}$, $Pr\left[ x(k+1) \in A|x(k) = x_1 \right] \geq \epsilon \ell_A$, where $\ell_A$ is the Lebesgue measure of $A$, since the probability densities of $\theta_i$ and $\rho_i$ are bounded below.\\ %, and with a probability greater then $\epsilon$ we can reach any input location in $\mathcal{X}$.\\
Now it is not restrictive consider only $\bar{m}$ measurements falling in $\mathcal{D}_j$, which are denoted by $z_{1}^j, \ldots, z_{\bar{m}}^j$
and collected on the input locations $x_{1}^j, \ldots, x_{\bar{m}}^j$. Calling $\bar{K}_j$ the sampled kernel in the input location falling in $\mathcal{D}_{j}$ and thanks to the fact that $Tr(\bar{K}_j) = \sum \Lambda(\bar{K}_j) = m\lambda$ (where $\Lambda(\bar{K}_j)$ is the set of eigenvalues of $\bar{K}_j$) and that all the eigenvalues of $\bar{K}_j$ are real and non negative ($\bar{K}_j$ is symmetric and semi positive definite), it holds that $\bar{K}_j \preceq \bar{m} \lambda \mathbb{I}$ so that 
\begin{equation*}
(\bar{K}_j + \sigma^2) \preceq (\bar{m} \lambda + \sigma^2)\mathbb{I} \Rightarrow (\bar{K}_j + \sigma^2)^{-1} \succeq (\bar{m} \lambda + \sigma^2)^{-1}\mathbb{I}.
\end{equation*}
So with probability greater then $1-\delta$ it is true that
\begin{equation*}
\begin{split}
 \text{Var} \left[ \mu(x) | z_{i}^j, \ldots, z_{\bar{m}}^j \right] 
&= K(x,x) - 
\begin{bmatrix}
K(x_{1}^j,x) & \ldots & K(x_{\bar{m}}^j,x)
\end{bmatrix}
(\bar{K}_j + \sigma^2 \mathbb{I})^{-1}
\begin{bmatrix}
K(x_{1}^j,x) \\
\vdots \\
 K(x_{\bar{m}}^j,x)
\end{bmatrix}\\
&  \leq \lambda - \frac{\sum_{h=1}^{\bar{m}} K(x_{h}^j,x)^2 }{\bar{m}\lambda + \sigma^2} \leq \lambda - \frac{\bar{m}(\lambda - \bar{\alpha})^2 }{\bar{m}\lambda + \sigma^2} = \lambda - \frac{(\lambda - \bar{\alpha})^2}{\lambda + \frac{\sigma^2}{\bar{m}}} \leq \epsilon
\end{split}
\end{equation*}
thus proving the statement.
\end{proof}

The consequence of Proposition \ref{propVariance} is that with probability one there exists a time $\bar{k}$ such that the agents dynamics switch from phase I to phase II, namely the agents dynamics will be rule by 
\begin{equation}
\label{phaseII}
\textbf{x}_{k+1} = \textbf{c}\left( \mathcal{V}(\textbf{x}(k)) \right)
\end{equation}
for $k > \bar{k}$, where the centroids are computed according to the estimate $\hat{\mu}_{\bar{k}}$.

\begin{proposition}
The trajectory generated by \ref{phaseII} converges to the set of configurations that generate centroidal Voronoi partitions of the pair $\left( \mathcal{X}, \hat{\mu}_{\bar{k}} \right)$.
\end{proposition}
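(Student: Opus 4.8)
The plan is to reduce the claim to the classical convergence analysis of the Lloyd algorithm recalled in Section \ref{CCA}. The crucial observation is that, once the switch to Phase II has occurred at time $\bar{k}$, the density entering the coverage functional is \emph{frozen} at $\hat{\mu}_{\bar{k}}$: conditionally on the data $Z_{N,\bar{k}}$ gathered up to that instant, $\hat{\mu}_{\bar{k}}$ is a deterministic function on $\mathcal{X}$, so the probabilistic character of the problem disappears and \eqref{phaseII} is literally the deterministic recursion \eqref{eq:Lloyd} with $\mu$ replaced by $\hat{\mu}_{\bar{k}}$. Hence it already suffices to invoke, verbatim, the statement recalled in Section \ref{CCA} (from \cite{Cortes:04}): along \eqref{eq:Lloyd} the coverage function is monotonically non-increasing and every solution converges to the set of configurations generating centroidal Voronoi partitions, here of the pair $(\mathcal{X},\hat{\mu}_{\bar{k}})$.

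To make the argument self-contained I would re-run the LaSalle-type reasoning explicitly. First, take $W(\mathbf{x}) := H(\mathbf{x};\hat{\mu}_{\bar{k}})$ as a candidate Lyapunov function on the state space $\mathcal{X}^N$, which is compact since $\mathcal{X}$ is a compact polygon; the map $T:\mathbf{x}\mapsto \mathbf{c}(\mathcal{V}(\mathbf{x}))$ leaves $\mathcal{X}^N$ invariant, each centroid being a weighted average of points of $V_i(\mathbf{x})\subseteq\mathcal{X}$ and $\mathcal{X}$ being convex. Second, recall the standard two-step descent inequality for Lloyd: replacing $\mathbf{x}(k)$ by its cell-centroids and then recomputing the Voronoi partition of these new points can only decrease the quadratic cost, so $W(\mathbf{x}(k+1))\le W(\mathbf{x}(k))$, with equality precisely when every $x_{i,k}$ already equals the centroid of $V_i(\mathbf{x}(k))$. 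Since $W\ge 0$ and is non-increasing, $W(\mathbf{x}(k))$ converges; by the discrete-time LaSalle invariance principle the trajectory approaches the largest invariant subset of $\{\mathbf{x}:W(T\mathbf{x})=W(\mathbf{x})\}$, which by the equality-case characterization is exactly the fixed-point set of $T$, i.e. the centroidal Voronoi configurations of $(\mathcal{X},\hat{\mu}_{\bar{k}})$ — the critical points of $H(\cdot;\hat{\mu}_{\bar{k}})$ as noted in Section \ref{MP}.

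The main technical obstacle is the regularity of $T$. The map $\mathbf{x}\mapsto\mathcal{V}(\mathbf{x})$ and the induced centroid map lose continuity at configurations where two generators coincide or a Voronoi cell degenerates, and the LaSalle argument needs $T$ (or a suitable limiting version of it) to be well behaved there; this is the delicate point handled in \cite{Cortes:04}, and I would simply import their treatment rather than redo it. A second point specific to our setting is that $\hat{\mu}_{\bar{k}}$, being a Gaussian-regression estimate, need not be strictly positive on all of $\mathcal{X}$, whereas $c_i(V_i(\mathbf{x}))=\bigl(\int_{V_i(\mathbf{x})}\hat{\mu}_{\bar{k}}\bigr)^{-1}\int_{V_i(\mathbf{x})}q\,\hat{\mu}_{\bar{k}}$ is only defined when $\int_{V_i(\mathbf{x})}\hat{\mu}_{\bar{k}}\neq 0$. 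I would dispose of this either by restricting Phase II to the event — of probability approaching one by tightening the threshold on $a(t)$ via Proposition \ref{propVariance} — on which $\hat{\mu}_{\bar{k}}$ is bounded away from zero, or by composing $\hat{\mu}_{\bar{k}}$ with a fixed positive monotone link before using it as a density; with this caveat the argument above goes through unchanged.
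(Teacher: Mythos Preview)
Your reduction is exactly what the paper intends: it states the proposition with no proof at all, relying implicitly on the fact that once the density is frozen at $\hat{\mu}_{\bar{k}}$, equation \eqref{phaseII} is literally the Lloyd recursion \eqref{eq:Lloyd}, whose convergence to centroidal Voronoi configurations was already recalled in Section~\ref{CCA} from \cite{Cortes:04}. Your self-contained LaSalle argument and your remarks on the regularity of the centroid map and the possible non-positivity of $\hat{\mu}_{\bar{k}}$ go beyond what the paper provides; these are genuine issues the paper simply does not address.
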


\section{Numerical Results}
\label{sct:NmrRsl}
In this section, we provide some simulations implementing the new estimation and coverage algorithm. We consider a team of $N=8$ agents placed, with a random initial position, in the domain $\mathcal{X}=[0,1] \times [0,1]$. Moreover, we use the Gaussian kernel
\begin{equation*}
K(x,x') = e^{-\frac{\|x-x'\|^2}{0.02}}
\end{equation*}
with the estimator and the posterior variance given by (\ref{Esmu}) and (\ref{Varmu}), respectively.
The unknown sensory function $\mu$ is a combination of four bi-dimensional Gaussian:
\begin{equation*}
\mu(x) = 
%\begin{bmatrix}
%20\\
%20
%\end{bmatrix}
20 \left( e^{\frac{-\|x - \mu_1\|^2}{0.04}} 
+ e^{\frac{\|x - \mu_2\|^2}{0.04}} \right)+
%\begin{bmatrix}
%5\\
%5
%\end{bmatrix}
5\left( e^{-\frac{\|x - \mu_3\|^2}{0.04}} 
+ e^{-\frac{\|x - \mu_4\|^2}{0.04}} \right),
\end{equation*}
where 
\begin{equation*}
\mu_1 = 
\begin{bmatrix}
0.2\\
0.2
\end{bmatrix}
\quad
\mu_2 = 
\begin{bmatrix}
0.8\\
0.8
\end{bmatrix}
\quad
\mu_3 = 
\begin{bmatrix}
0.8\\
0.2
\end{bmatrix}
\quad
\mu_4 = 
\begin{bmatrix}
0.2\\
0.8
\end{bmatrix}.
\end{equation*}
For computational reasons, the function $\mu$ and the posterior variance are evaluated over a grid of step $0.05$. 
The two parameters $\sigma_{\Delta_i}^2$ and $\sigma_{C_i}^2$ are both set to $0.1$ and the threshold that allows to switch from phase I to phase II is equal to $0.3$.\\
The adopted $a(t)$ is as described in Example \ref{exampleOfA}. This means that, when the maximum of the posterior is large, the value of $a(t)$ is also large to allow a good estimation. Instead, when the maximum of the posterior variance becomes small, also the value of $a(t)$ is reduced to 
favor agents movement towards the centroids. \\
An example is in Figure \ref{fig:Gradient} which displays the posterior variance (contour plot), the gradient (quiver plot) and the agents (red diamonds). The figure illustrates results from the first iteration (just because the plot is more clear).
\begin{figure}[h!]
\begin{center}
\begin{tabular}{c}
\includegraphics[width=0.8\columnwidth]{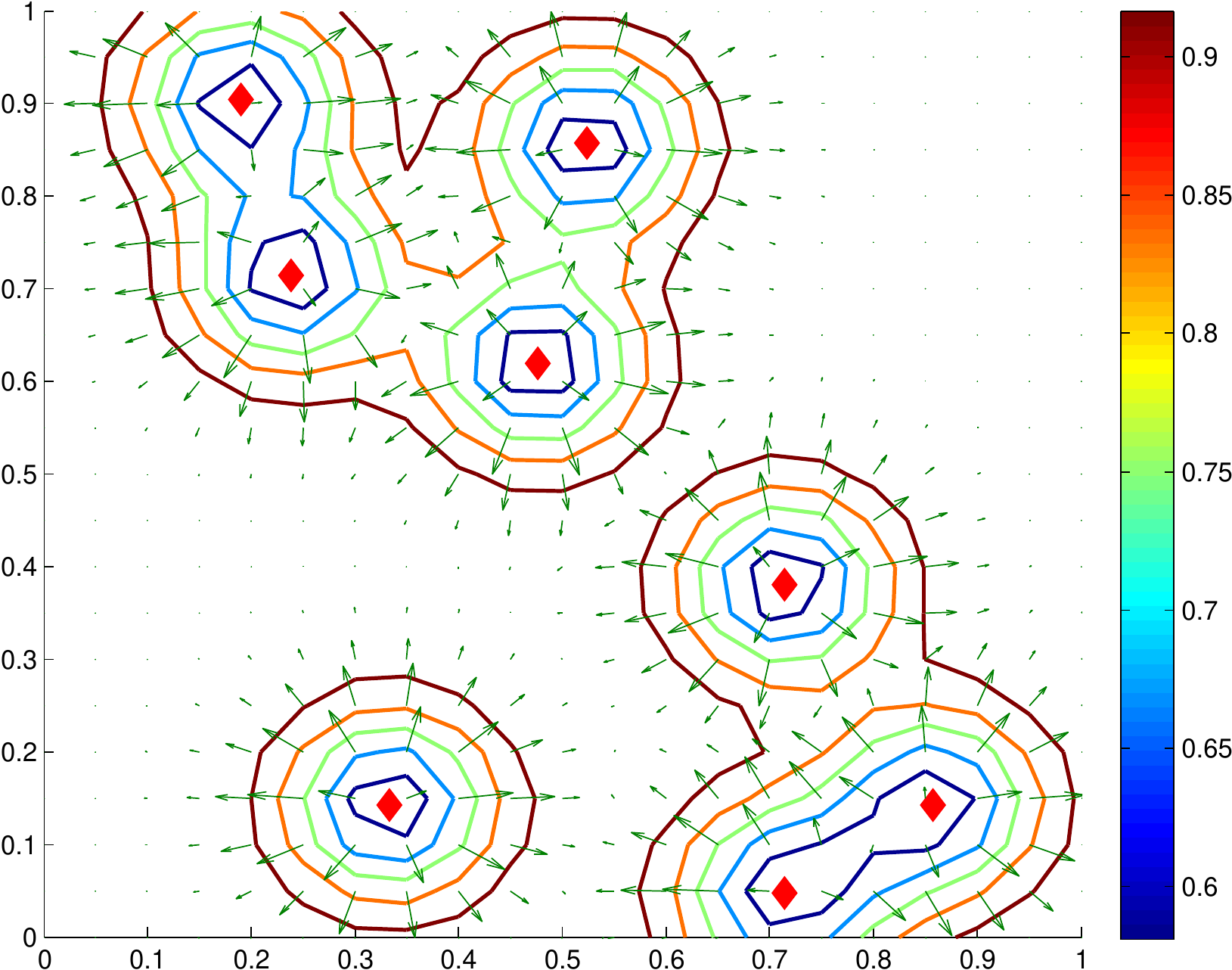}
\end{tabular}
\caption{\label{fig:Gradient}Contour plot of the posterior variance with the directions of the gradient. The red diamonds are the agents.} 
\end{center}
\end{figure}
\\Figure \ref{fig:PosteriorEvolution} plots the profile of the maximum, the average and the minimum of the posterior, as a function of the number
of iterations.
\begin{figure}[h!]
\begin{center}
\begin{tabular}{c}
\includegraphics[width=0.8\columnwidth]{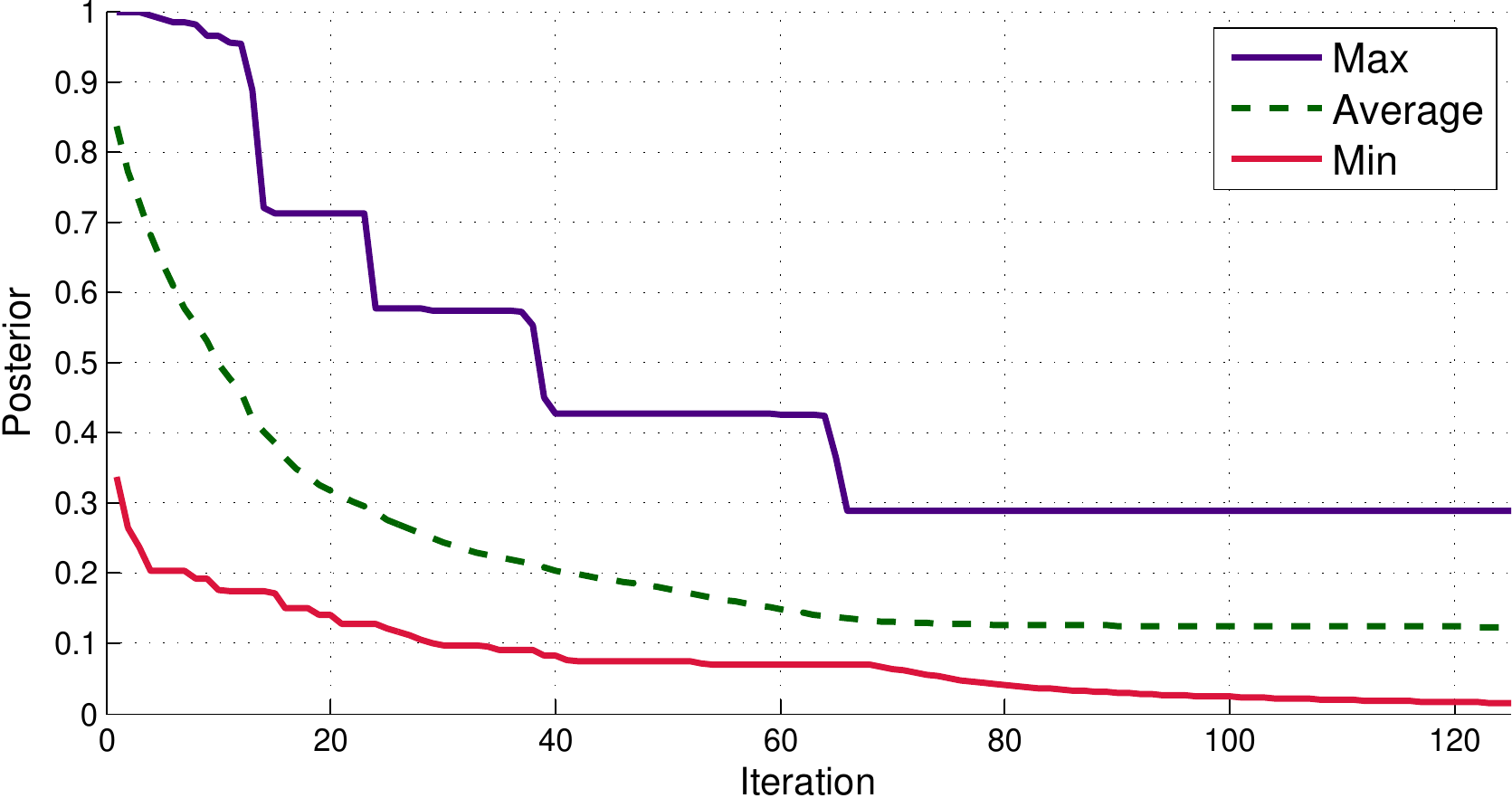}
\end{tabular}
\caption{\label{fig:PosteriorEvolution}Evolution of the maximum, the average $\left(\frac{1}{\int_{\mathcal{X}}dx} \int_{\mathcal{X}} V(x) dx \right)$ and the minimum of the posterior over the time.} 
\end{center}
\end{figure}
Finally, Figure \ref{fig:Centroid} reports the Voronoi regions associated with the agents, as well as the estimated function $\hat{\mu}(x)$ (contour plot). 
The final agents positions (red circles) are close to the ideal agents positions, computed using the true sensory function $\mu$ (black circles).%  One can see that agents final positions are satisfactory. 
\begin{figure}[h!]
\begin{center}
\begin{tabular}{c}
\includegraphics[width=0.8\columnwidth]{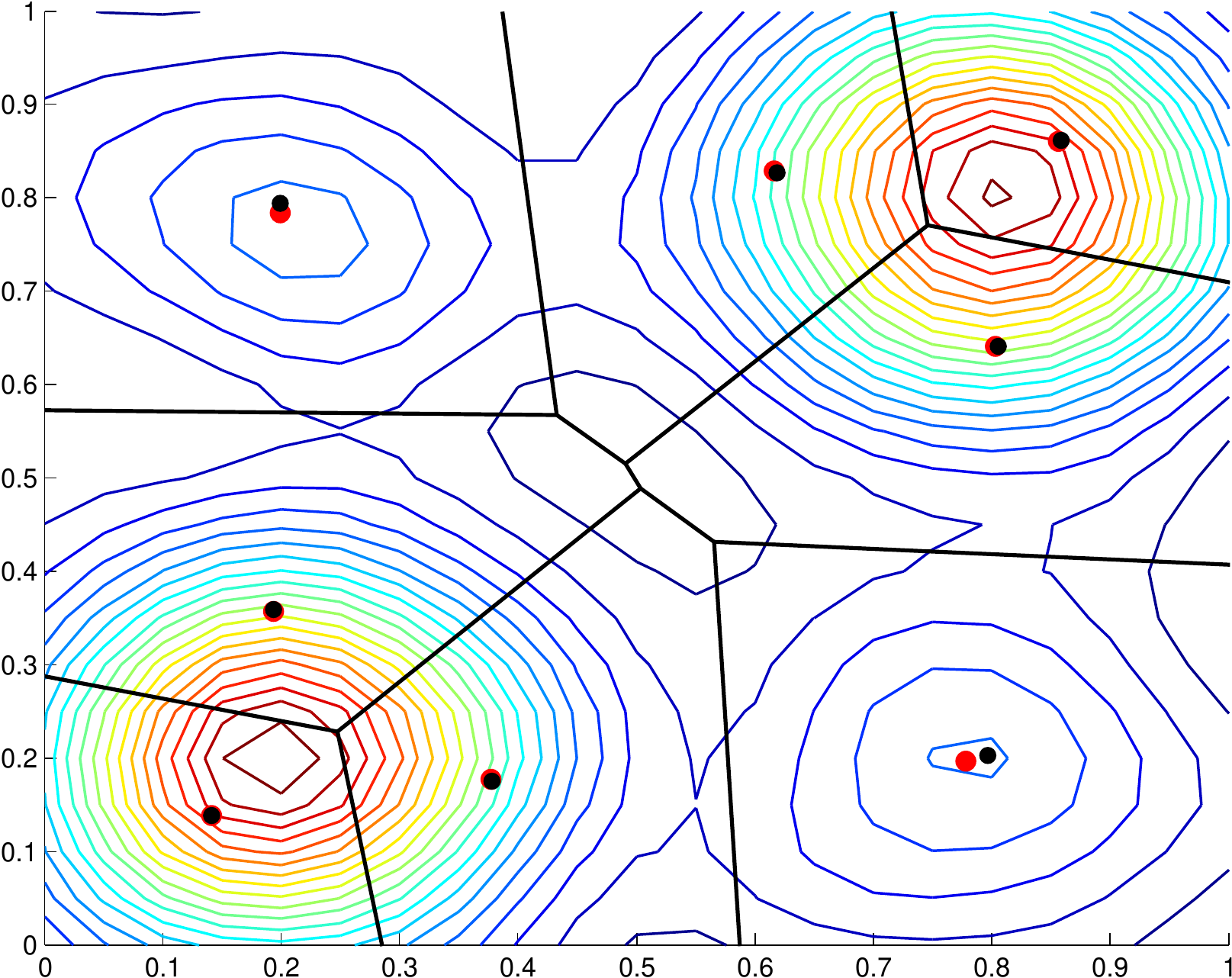}
\end{tabular}
\label{fig:Centroid}
\caption{Voronoi regions of the agents with contour plot of the density function. The red circles are the agents positions, the black circles are the centroids positions computed using $\mu$ instead of $\hat{\mu}_{\bar{k}}$.} 
\end{center}
\end{figure}

\section{Conclusions}

We have proposed a new algorithm to perform simultaneously estimation and coverage.
The sensory function is seen as a Gaussian random field which has to be reconstructed in an on line manner.
A set of control inputs establish the agents movement, trying to balance coverage and estimation.
We have seen that the resulting problem is also an instance of a non standard function learning problem
where input locations follow a non Markovian process with stochastic adaption allowed to happen infinitely often.
Convergence of the estimator has been discussed also assuming that the function prior is not correct (see Appendix).
Numerical experiments show good performance. 
Even if the centralized algorithm finds many applications in different fields, such as \cite{PereiraCommunication} and \cite{Shah}, we are also working on a distributed version. The core of the algorithm is based on the on-line non-parametric regression studied  in \cite{Varagnolo}.
In addition, we also plan to provide a distributed algorithm
possibly also accounting for time variance of the sensory function.

\section*{Appendix: convergence in RKHS}

In section \ref{Phases} we have seen that the proposed coverage algorithm is divided into two phases.
In the first phase, a trade-off between estimation and coverage is searched for at every step, while 
the final coverage step is performed in the second one which starts when the posterior variance
is under a certain threshold. Recall that Proposition \ref{propVariance} ensures (in probability) that the second
phase will always be reached. This result is obtained under the assumptions that the Bayesian prior on the unknown function is correct.
In this Appendix we will relax this assumption, just assuming that the function belongs to the reproducing kernel Hilbert space
induced by the covariance $K$ which is thus seen as a reproducing kernel. 
Our key question is to asses if, in the first phase where input locations follow a very complex 
non Markovian process and adaptation may happen infinitely often, the estimate is converging to the true function
in some norm. We will se that the answer is positive: under some technical conditions
convergence (in probability) holds under the RKHS norm (which also implies convergence in the sup-norm).

Dependence of input locations
and measurements on the agent number is 
skipped to simplify notation. Hence, 
the set of input locations explored by the network and related meaurements
available at instant $t$ are denoted by $x_1,\ldots,x_t$ and $y_1,\ldots,y_t$, respectively.\\
The following proposition relies on the well known relationship between 
Bayes estimation of Gaussian random fields and regularization in RKHS.

\begin{proposition}
Let $\H$ be the RKHS induced by the kernel $K: \mathcal{X} \times \mathcal{X} \rightarrow \R$,
with norm denoted by $\| \cdot \|_{\H}$. Then,  
for any $x \in \X$, one has 
\begin{equation}\label{Tikhonov}
\hat{\mu}_{t}  = \arg \min_{f \in \H}  J(f) \\
\end{equation}
where
\begin{equation}\label{J}
J(f) = \frac{\sum_{k=1}^t \left( y_i- f(x_i) \right)^2}{t}     +   \gamma \| f \|_{\H}^2, \quad \gamma=\frac{\sigma^2}{t} 
\end{equation}
\end{proposition}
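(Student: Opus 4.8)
The plan is to derive the identity from the Representer Theorem followed by an explicit finite-dimensional least-squares computation, and then to observe that the resulting minimizer coincides term-by-term with the posterior mean (\ref{Esmu}).

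First I would check that $J$ in (\ref{J}) has a unique minimizer over $\H$. Since each evaluation functional $f \mapsto f(x_k)$ is bounded and linear on $\H$ (by the reproducing property $f(x_k)=\langle f,K(x_k,\cdot)\rangle_{\H}$), the empirical loss $\tfrac1t\sum_k (y_k-f(x_k))^2$ is convex and continuous in $f$; adding the strictly convex, coercive penalty $\gamma\|f\|_{\H}^2$ with $\gamma=\sigma^2/t>0$ makes $J$ strictly convex and coercive, so $\arg\min_{f\in\H}J(f)$ is a well-defined singleton.

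Next I would invoke the Representer Theorem. Decomposing any $f\in\H$ as $f=f_{\parallel}+f_{\perp}$ with $f_{\parallel}$ in the finite-dimensional subspace $\mathcal{S}=\mathrm{span}\{K(x_1,\cdot),\dots,K(x_t,\cdot)\}$ and $f_{\perp}\perp\mathcal{S}$, the reproducing property gives $f(x_k)=f_{\parallel}(x_k)$ for all $k$, so the loss is insensitive to $f_{\perp}$, while $\|f\|_{\H}^2=\|f_{\parallel}\|_{\H}^2+\|f_{\perp}\|_{\H}^2$; hence the minimizer satisfies $f_{\perp}=0$ and can be written $\hat\mu_t=\sum_{k=1}^t c_k K(x_k,\cdot)$ for some $c\in\R^t$. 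Substituting this parametrization, and using $f(x_j)=(\bar K c)_j$ and $\|f\|_{\H}^2=c^{\top}\bar K c$ with $\bar K$ the Gram matrix in (\ref{Esmu}), reduces the problem to the finite-dimensional convex program $\min_{c\in\R^t}\tfrac1t\|y-\bar K c\|^2+\gamma\,c^{\top}\bar K c$. Stationarity reads $\bar K\big((\bar K+t\gamma\mathbb{I})c-y\big)=0$, which is solved by $(\bar K+t\gamma\mathbb{I})c=y$; since $t\gamma=\sigma^2$, this is precisely $c=(\bar K+\sigma^2\mathbb{I})^{-1}y$, matching (\ref{Esmu}), so the two objects coincide.

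The delicate point is that $\bar K$ may be singular (for instance when an input location is visited more than once), so stationarity alone does not force $(\bar K+\sigma^2\mathbb{I})c=y$, and the coefficient vector need not be unique. The clean workaround is to verify directly that $c=(\bar K+\sigma^2\mathbb{I})^{-1}y$ satisfies the stationarity equation, and then appeal to the strict convexity of $J$ as a function of $f$ — established in the first step — to conclude that the \emph{function} $\sum_k c_k K(x_k,\cdot)$ it produces is the unique global minimizer regardless of any non-uniqueness at the level of $c$. An alternative, if one prefers to keep the argument short, is to cite the standard equivalence between the posterior mean of a Gaussian random field and the solution of regularized least squares in the RKHS induced by its covariance.
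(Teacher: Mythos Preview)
Your argument is correct and careful: the existence/uniqueness via strict convexity and coercivity, the Representer Theorem reduction, the finite-dimensional normal equation $\bar K\big((\bar K+\sigma^2\mathbb{I})c-y\big)=0$, and the handling of a possibly singular $\bar K$ by exhibiting one solution and appealing to strict convexity in $f$ are all sound.

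The comparison with the paper is simple: the paper does not actually prove this proposition. It prefaces the statement by saying it ``relies on the well known relationship between Bayes estimation of Gaussian random fields and regularization in RKHS'' and then closes with a $\blacksquare$ immediately after the statement, treating the result as standard. Your last sentence (``cite the standard equivalence\ldots'') is in fact exactly the route the paper takes. So your detailed Representer-Theorem derivation is strictly more than the paper supplies; it is a genuine proof where the paper offers only a pointer to folklore.
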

\begin{flushright}
$\blacksquare$
\end{flushright}

We consider a very general framework to describe the process $x_i$, 
which contains that previously described as special case.
The input locations  $x_i$ are thought of as random vectors each randomly drawn from a Borel nondegenerate probability density function $p_i \in \mathcal{P}$, with
the noise $\nu_t$ independent of $x_1,\ldots,x_t$ for any $t$.
We do not specify any particular stochastic or deterministic mechanism through which
the $p_i$ evolve over time. We just need two conditions regarding 
the behavior of some covariances and the smoothness of $\mu$, as 
detailed in the next subsection. 

\subsection{Assumptions}
\label{Assum}

To state our assumptions, first we need to set up some additional notation.
We use $co \mathcal{P}$ to denote the convex hull of $\mathcal{P}$,
i.e. the smallest convex set of densities containing $\mathcal{P}$.
Let also $\L^2_p$ be the Lebesque space parametrized 
by the density $p$, i.e. the space of real functions $f:X \rightarrow \R$ such that 
$$
\| f \|^2_p := \int_X f^2(a) p(a) da < \infty.
$$

Our first assumption regards the decay of the 
 covariance between a class of functions evaluated at different input locations.\\
% {\bf{Va dimostrato che, se le densit\'a $p_i$ sono tutte uniformly bounded below,
% questa assunzione \'e automaticamente soddisfatta.}}
 
\begin{assumption}[covariances decay] \label{A1}
%Let $B_q$ the following set of functions
%$$
%B_q= \{  f :    \}
%$$
Let $f_1,f_2$ be any couple of functions satisfying
$$
\| f_{1 \vee 2} \|_p < q < \infty, \quad \forall p \in co\mathcal{P}
$$
Then, for every time instant $i$, there exists a constant  $\A_1< \infty$, dependent on $q$ but not on $f$, such that
\begin{equation*}
%\sum_{k=0}^\infty  \left| Cov(f_1(x_i)+f_2(x_i)\nu_i,f_1(x_{i+k})+f_2(x_{i+k})\nu_{i+k}) \right| < \A_1 
\left| \sum_{k=0}^\infty   Cov(f_1(x_i)+f_2(x_i)\nu_i,f_1(x_{i+k})) \right| < \A_1 
\end{equation*}
\end{assumption}
\begin{flushright}
$\blacksquare$
\end{flushright}

%\begin{remark}
%A natural question is if the stochastic process generated for exploring $X$
%satisfies the above assumption.
%A formal proof would require
%a non trivial analysis of non Markovian processes over uncountable state spaces. 
%Simulations however suggest that covariances converge
%to zero at an exponential rate. This fact is not surprising.
%In fact, our transition kernels are all irreducible and define
%a perturbed version of a random walk Metropolis-Hastings scheme which is  
%known to be geometrically ergodic when $X$ is a compact [].  
%To further support this fact, think also of a discretization of $X$
%in cells, so that transition kernels become irreducible stochastic matrices,
%all associated with particular random walks on a graph with uniform distribution as invariant distribution. According to Theorem 2 in [], the product  of any deterministic sequence of such transition matrices converges geometrically to the uniform distribution so that the input locations
%sequence satisfies Assumption 1.
%\end{remark}

The second assumption is related to smoothness of $\mu$.
Below, given a density $p$, the operator
$L_p : \L^2_p \rightarrow \H$ is defined by 
\begin{equation}\label{BarL}
L_p[f](x) = \int_{X} K(x,a) f(a) p(a) da, \quad x \in X
\end{equation}

\begin{assumption}[smoothness of the target function] \label{A2}
There exist constants $r$, with $\frac{1}{2}<r \leq 1$, and $\A_2 < \infty$, such that 
\begin{equation}\label{A2Cond}
\sup_{p \in co\mathcal{P}} \| L_p^{-r} \mu  \|_{p} < \A_2.  %\qquad \forall \theta \in \Theta
\end{equation}
%for every possible set of $m$ densities which can be selected from $\mathcal{P}$.
\end{assumption} 
\begin{flushright}
$\blacksquare$
\end{flushright}

\subsection{Consistency in RKHS}
\label{Const}

Our main result is reported below.

\begin{proposition}\label{Main}
Let Assumptions \ref{A1} and \ref{A2} hold.  In addition, let the 
regularization parameter $\gamma$ depend on instant $t$ as follows
\begin{equation}\label{gamman}
\gamma \propto t^{-\alpha}, \quad 0<\alpha < \frac{1}{2}. %n^{-\frac{1}{1+2r}}
\end{equation}
Then, as $t$ goes to infinity, one has
\begin{equation}\label{Result2}
\sup_{x \in X} |\hat{\mu}_{t} (x) - \mu(x) | \longrightarrow_p 0
\end{equation}
where $\longrightarrow_p$ denotes convergence in probability.
\end{proposition}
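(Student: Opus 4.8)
The plan is to follow the classical regularization-consistency route of Smale--Zhou / Caponnetto, but carried out uniformly over the convex hull $co\mathcal{P}$ of admissible sampling densities, so that the non-Markovian adaptation of the $p_i$ is absorbed into worst-case bounds. First I would decompose the error $\hat{\mu}_t - \mu$ in $\H$ into a \emph{sample error} and an \emph{approximation (bias) error}. For a fixed density $p$, introduce the ``population'' regularized solution $f_\gamma^p = (L_p + \gamma I)^{-1} L_p \mu$ (with $L_p$ as in \eqref{BarL}, viewed appropriately as an operator on $\L^2_p$ or $\H$). Then write
\begin{equation*}
\hat{\mu}_t - \mu = \underbrace{(\hat{\mu}_t - f_\gamma^{p})}_{\text{sample error}} + \underbrace{(f_\gamma^{p} - \mu)}_{\text{bias}},
\end{equation*}
where $p$ is chosen as a suitable element of $co\mathcal{P}$ (e.g.\ an average of the $p_i$ used up to time $t$). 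Assumption~\ref{A2}, i.e.\ $\sup_{p\in co\mathcal{P}}\|L_p^{-r}\mu\|_p < \A_2$ with $r>1/2$, is exactly the source condition that controls the bias: by the standard spectral-calculus estimate one gets $\|f_\gamma^{p} - \mu\|_\H \lesssim \A_2\, \gamma^{\,r-1/2}$, which tends to $0$ since $r>1/2$ and $\gamma \propto t^{-\alpha}\to 0$. The requirement $r>1/2$ (rather than $r>0$) is what upgrades convergence from $\L^2_p$ to the RKHS norm, and then the reproducing property $|f(x)|\le \sqrt{K(x,x)}\,\|f\|_\H = \sqrt{\lambda}\,\|f\|_\H$ converts $\|\cdot\|_\H$-convergence into the sup-norm convergence \eqref{Result2}.

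Next I would bound the sample error. Writing the empirical operators $L_{\mathbf{x}} = \frac1t\sum_k K(x_k,\cdot)\langle K(x_k,\cdot),\cdot\rangle_\H$ and empirical data term, one has the identity
\begin{equation*}
\hat{\mu}_t - f_\gamma^{p} = (L_{\mathbf{x}} + \gamma I)^{-1}\Bigl[\bigl(\tfrac1t\textstyle\sum_k y_k K(x_k,\cdot) - L_p\mu\bigr) - (L_{\mathbf{x}} - L_p)f_\gamma^{p}\Bigr].
\end{equation*}
Since $\|(L_{\mathbf{x}}+\gamma I)^{-1}\|\le \gamma^{-1}$, it suffices to show the two bracketed random quantities are $o_p(\gamma)$ in $\H$-norm. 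Each is a normalized sum of the form $\frac1t\sum_{k=1}^t \xi_k$ where $\xi_k$ is an $\H$-valued random element built from $K(x_k,\cdot)$, $\nu_k$, and $f_\gamma^p(x_k)$; in particular the terms $y_k = \mu(x_k)+\nu_k$ split into a $\mu$-part and a noise-part $\nu_k K(x_k,\cdot)$, which is where Assumption~\ref{A1} enters. The crucial point is that the $x_k$ are \emph{not} i.i.d.\ and not even Markovian, so I cannot invoke a standard Hoeffding/Bernstein bound for independent Hilbert-valued variables. Instead I would estimate $\E\|\frac1t\sum_k \xi_k\|_\H^2 = \frac1{t^2}\sum_{k,\ell}\E\langle \xi_k,\xi_\ell\rangle_\H$ and control the off-diagonal covariance sum using Assumption~\ref{A1}: the bound $|\sum_{k\ge0} \mathrm{Cov}(f_1(x_i)+f_2(x_i)\nu_i,\, f_1(x_{i+k}))| < \A_1$, applied coordinatewise in a basis of $\H$ (or directly to the scalar fields $x\mapsto K(x,\cdot)$ paired against test functions), yields $\sum_{k,\ell}\E\langle\xi_k,\xi_\ell\rangle_\H = O(t)$, hence $\E\|\frac1t\sum\xi_k\|_\H^2 = O(1/t)$. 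To apply Assumption~\ref{A1} legitimately I must first check its hypothesis, namely that the relevant functions ($\mu$ itself, and $f_\gamma^p$) have $\L^2_p$-norm bounded uniformly over $p\in co\mathcal{P}$ by a finite $q$ independent of $t$; for $\mu$ this follows from Assumption~\ref{A2} and boundedness of $L_p$, and for $f_\gamma^p = (L_p+\gamma I)^{-1}L_p\mu$ one shows $\|f_\gamma^p\|_p \le \|\mu\|_p$ (contraction), so $q$ can be taken uniform.

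Putting the pieces together: a Markov/Chebyshev inequality turns $\E\|\text{sample error numerator}\|_\H^2 = O(1/t)$ into the statement that the numerator is $O_p(t^{-1/2})$, hence the sample error is $O_p(\gamma^{-1} t^{-1/2}) = O_p(t^{\alpha - 1/2})$, which $\to 0$ in probability precisely because $\alpha < 1/2$ — this is where the constraint \eqref{gamman} on the exponent is used. Combined with the deterministic bias bound $O(\gamma^{r-1/2}) = O(t^{-\alpha(r-1/2)}) \to 0$, we get $\|\hat{\mu}_t - \mu\|_\H \to_p 0$, and the reproducing-kernel inequality $\sup_x|\hat\mu_t(x)-\mu(x)| \le \sqrt{\lambda}\,\|\hat\mu_t-\mu\|_\H$ gives \eqref{Result2}. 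I expect the main obstacle to be the sample-error variance estimate under the non-i.i.d., possibly-infinitely-adapting sampling scheme: one has to set up the Hilbert-space second-moment computation carefully, verify that Assumption~\ref{A1}'s one-sided covariance-sum bound can be summed over all lags and all $t$ to give the clean $O(t)$ total, and handle the fact that $f_\gamma^p$ itself depends on $t$ through $\gamma$ (so the ``functions'' fed into Assumption~\ref{A1} form a $t$-indexed family, which is why the uniformity of the constant $\A_1$ in $f$ — only depending on the bound $q$ — is essential). The bias/source-condition part and the final sup-norm conversion are routine given the RKHS machinery.
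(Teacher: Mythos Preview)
Your proposal is correct and follows essentially the same route as the paper: the same bias/sample decomposition around the population regularizer $f_\gamma^{\bar p_t}=(L_{\bar p_t}+\gamma I)^{-1}L_{\bar p_t}\mu$ built from the time-averaged density $\bar p_t=\frac{1}{t}\sum_i p_i\in co\mathcal{P}$, the same source-condition bias bound $\gamma^{r-1/2}\A_2$ via Assumption~\ref{A2}, and the same second-moment control of the $\H$-valued centered sum using Assumption~\ref{A1} to get the sample error $O_p(\gamma^{-1}t^{-1/2})$. The only cosmetic differences are that the paper writes the sample-error inequality by citing Smale--Zhou directly (rather than via your resolvent identity, which it then coarsens to the same $\gamma^{-1}$ bound anyway) and carries out the variance computation by an explicit Mercer eigenexpansion $K(x_i,\cdot)=\sum_j\lambda_j\phi_j(x_i)\phi_j(\cdot)$, invoking uniform boundedness of the $\phi_j$ together with $\|\bar\mu_t-\mu\|_t\le\|\mu\|_t$ to verify the hypothesis of Assumption~\ref{A1} with a $t$-independent constant---precisely the delicate point you flagged.
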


\begin{proof}
We show that, as $t$ goes to $\infty$, 
the estimator $\hat{\mu}_{t}$ in (\ref{Tikhonov}) 
converges in probability to $\mu$ in the topology of $\H$
and, hence, in that of the continuous functions.
First, some useful notation is introduced.\\ 
Denote by
$$
p_1,\ldots,p_t %, \quad p_i \in \mathcal{P},
$$
the first $t$ densities selected from $\mathcal{P}$
during the first $t$ exploration steps. Note that
repetitions can of course be present, e.g. one can have $p_1=p_2$.
The average density is
\begin{equation}\label{Barp}
\bar{p}_t(x) = \frac{ \sum_{i=1}^t p_i (x) }{t}.
\end{equation}
%As clear in the sequel, 
%such a density plays a key role in the analysis. 
It is useful to indicate with  
$\L^2_t$ the Lebesque space of real functions with norm
$$
\| f \|^2_t := \int_X f^2(a)\bar{p}_t(a) da < \infty.
$$
Note that, in the description of the space and its norm, the integer $t$ in the
subscript replaces $\bar{p}_t$. Following this convention, let also  
\begin{equation}\label{BarL}
L_t[f](x) = \int_{X} K(x,a) f(a) \bar{p}_t(a) da, \quad x \in X
\end{equation}

%%%%%%%%%FINE INTRO %%%%%%%%%%%%%%%%%%

The following function plays a key role in the subsequent analysis:
\begin{equation}\label{fgm}
\bar{\mu}_{t} = \arg \min_{f \in \H}  \| f-\mu \|^2_{\L^2_t} + \gamma \| f \|^2_{\H} 
\end{equation}
Note that, differently from the data-free limit
function introduced in  %$f_{\gamma}$ 
\cite[eq. 2.1]{Smale2007}, here $\bar{\mu}_{t}$
is a (possibly random) time-varying function, depending on the time instant $t$.\\
One has 
\begin{equation}\label{DecomErr}
\|\hat{\mu}_{t} -\mu \|_{\H} \leq  \left \| \bar{\mu}_{t} -\mu \right \|_{\H} + \left \|\hat{\mu}_{t} -\bar{\mu}_{t} \right \|_{\H}%\|\hat{\mu}_{t} -\mu \|_{\H} \leq  \| f_{\theta}-\mu \|_{\H} + \|\hat{\mu}_{t} -f_{\theta} \|_{\H}
\end{equation}
We start analyzing the first term on the RHS of (\ref{DecomErr}).
The average density $\bar{p}_t$ varies over time but never escapes from 
$co \mathcal{P}$.  Then, combining
Assumption \ref{A2} and eq. (3.11) in \cite{Smale2007},
one obtains the following bound uniform in $t$:
\begin{equation}\label{FirstTerm}
\left \| \bar{\mu}_{t} -\mu \right \|_{\H} 
\leq    \gamma^{r-\frac{1}{2}} \| L_t^{-r} \mu  \|_{t} \leq \gamma^{r-\frac{1}{2}} \A_2
\end{equation}

Now, we study $\E \left \|\hat{\mu}_{t} -\bar{\mu}_{t} \right \|_{\H}$, 
i.e. the expectation of the second term on the RHS of (\ref{DecomErr}).
Despite the complex nature of $\bar{\mu}_{t}$, we can apply the same arguments introduced in 
the first part of Section 2 of \cite{Smale2007} which, combined with 
definitions (\ref{Barp},\ref{BarL}), lead to  the equalities
\begin{equation*}%\label{mtheta}
\gamma \bar{\mu}_{t} = L_t[\mu - \bar{\mu}_{t}] =\frac{1}{t}  \sum_{i=1}^{t} L_{p_i} [\mu - \bar{\mu}_{t}]    
\end{equation*}
as well as to the following inequality
\begin{subequations}
\begin{align*}  \small
 & \E \left \|\hat{\mu}_{t} -\bar{\mu}_{t} \right \|_{\H}\\ \nonumber 
 & \leq \frac{1}{\gamma} \small \E  \left[  \left\| \frac{1}{t}  \sum_{i=1}^{t} \left((y_i- \bar{\mu}_{t}(x_i) ) K(x_i, \cdot)   
- L_{p_i}[ \mu - \bar{\mu}_{t}](\cdot)  \right )   \right\|_{\H} \right] 
\end{align*}
\end{subequations}

To gain further insight on the above expression, first consider
\begin{equation}\label{SqNorm}
\E  \left[  \left\| \frac{1}{t}  \sum_{i=1}^{t} \left((y_i- \bar{\mu}_{t}(x_i) ) K(x_i, \cdot)  
- L_{p_i}[ \mu - \bar{\mu}_{t}](\cdot)  \right )   \right\|_{\H}^2 \right] 
\end{equation}
Using the Mercer theorem, we can always find real and positive eigenvalues $\lambda_j$ and
related eigenfunctions $\phi_j$, e.g. orthonormal w.r.t. the classical Lebesgue measure on $X$, such that %expand the kernel $K$ as follows
$$
K(x_i,\cdot)= \sum_{j=1}^{\infty} \lambda_j \phi_j(x_i) \phi_j(\cdot)
$$
Then, one has
\begin{eqnarray*}
(y_i- \bar{\mu}_{t}(x_i) ) K(x_i, \cdot) &=& \sum_{j=1}^{\infty} (y_i- \bar{\mu}_{t}(x_i) ) \lambda_j \phi_j(x_i) \phi_j(\cdot) 
=   \sum_{j=1}^{\infty} a_j(x_i)   \lambda_j  \phi_j(\cdot) 
\end{eqnarray*}
where we have used the following correspondence
$$
a_j(x_i) = (\mu(x_i)+ \nu_i - \bar{\mu}_{t}(x_i) )\phi_j(x_i)%, \qquad b_j(x_i) = \nu_i \phi_j(x_i).
$$
Now, simple calculations show that
\begin{eqnarray*}
\E\left[(y_i- \bar{\mu}_{t}(x_i) ) K(x_i, \cdot)\right] &=& L_{p_i}[ \mu - \bar{\mu}_{t}](\cdot)
=\sum_{j=1}^\infty \E[a_j(x_i)] \lambda_j \phi_j (\cdot)
\end{eqnarray*}

Using RKHS norm's structure, (\ref{SqNorm}) can now be rewritten as follows 
\begin{subequations}
\begin{align*}  \small
&\frac{1}{t^2} \E  \left\|	 \sum_{i=1}^t \sum_{j=1}^\infty (a_j(x_i)-\E[a_j(x_i)])\lambda_j \phi_j(\cdot)	\right\|_{\H}^2  
= \E \left[ 	\sum_{i=1}^t \sum_{k=1}^t \sum_{j=1}^\infty \frac{\lambda_j}{t^2} \left(a_j(x_i) -\E[a_j(x_i)]\right)\left(a_j(x_k)-\E[a_j(x_k)]\right) \right]%\\
%& + \frac{\sigma^2}{t^2} \sum_{i=1}^t \E[K(x_i,x_i)] 
\end{align*}
\end{subequations}
We now obtain an upper bound on the first term present in the rhs of the above equation.
First, taking $f=0$ in the objective in (\ref{fgm}), one has  
$$
\| \bar{\mu}_{t} - \mu \|_t \leq \| \mu \|_t \leq  \sup_{p \in co\mathcal{P}}   \| \mu \|_p < \ell_1 <\infty.
$$
where the last inequality derives from continuity of the function $\mu$ 
on the compact $X$. 
In addition, $\phi_j$ are all contained in a ball of the space of continuous functions, say of radius 
$\ell_2$.\footnote{This holds for the Gaussian kernel and, in practice, also for every covariance adopted in the literature, e.g. spline, 
Laplacian and polynomial kernels.}
This leads to the following bound, uniform in $m$ and $j$: 
$$
\| a_j \|_t \leq \ell_1 \ell_2.
$$
The above inequality permits to exploit Assumption \ref{A1} to obtain
\begin{subequations}
\begin{align*}  \small
& \frac{1}{t^2} \E \left[ 	\sum_{i=1}^t \sum_{k=1}^t \sum_{j=1}^\infty  \lambda_j \left(a_j(x_i) -\E[a_j(x_i)]\right)\left(a_j(x_k)-\E[a_j(x_k)]\right) \right] \\
& \leq  \sum_{j=1}^\infty \frac{\lambda_j}{t^2} \sum_{i=1}^t \sum_{k=1}^t  \left | \E \left[ \left(a_j(x_i) -\E[a_j(x_i)]\right)\left(a_j(x_k)-\E[a_j(x_k)]\right) \right] \right| 
% && \leq \frac{2}{t} \sum_{j=1}^\infty \lambda_j  \sum_{k=1}^\infty \left | \E \left[ \left(a_j(x_i) -\E[a_j(x_i)]\right)\left(a_j(x_k)-\E[a_j(x_k)]\right) \right] \right|
 \leq \frac{2 \A_1 \sum_{j=1}^{\infty} \lambda_j}{t}  %\leq \frac{\A^2_3}{t} 
\end{align*}
\end{subequations}
where, in virtue of the Mercer theorem, $\sum_{j=1}^{\infty} \lambda_j< \infty$ 
(recall, in fact, that each $L_p$ induced by $K$, with $p$ e.g. the uniform distribution on $X$, is a trace class operator).
This last result, together with the Jensen's inequality, leads to % the following bound for (\ref{Etai4})
\begin{equation}\label{Etai7}
 \E[\|\hat{\mu}_{t} -\bar{\mu}_{t} \|_{\H}  ]  \leq   \frac{\sqrt{2 \A_1 \sum_{j=1}^{\infty} \lambda_j }}{\gamma\sqrt{ m} }. 
 %+ \frac{\sigma^2}{t} \sum_{i=1}^t \E[K(x_i,x_i)] }}{\gamma\sqrt{ m} }. 
 \end{equation}
Combining (\ref{Etai7})  with  (\ref{DecomErr}) and (\ref{FirstTerm}), we obtain
\begin{eqnarray*} 
 \E[\|\hat{\mu}_{t}-\mu \|_{\H}] \leq \gamma^{r-\frac{1}{2}} \A_2 +  \frac{\sqrt{2 \A_1 \sum_{j=1}^{\infty} \lambda_j } }{\gamma\sqrt{ m} }
 %+\frac{\sigma^2}{t} \sum_{i=1}^t \E[K(x_i,x_i)] } }{\gamma\sqrt{ m} } 
\end{eqnarray*} 
and this completes the proof.
\end{proof}
\bibliographystyle{IEEEtran}
\bibliography{biblio}

\end{document}